\newtheorem{problem}{Problem}
\newtheorem{theorem}{Theorem}
\newtheorem{lemma}{Lemma}
\newtheorem{remark}{Remark}
\newcommand{\paren}[1]{\left(#1\right)}
\newcommand{\bracket}[1]{\left[#1\right]}
\renewcommand{\brace}[1]{\left\{#1\right\}}
\renewcommand{\ang}[1]{\left\langle#1\right\rangle}
\newcommand{\floor}[1]{\left\lfloor#1\right\rfloor}
\newcommand{\ceil}[1]{\left\lceil#1\right\rceil}
\newcommand{\F}{\mathbb{F}}
\newcommand{\ring}{\mathcal{R}}
\newcommand{\M}[1]{\begin{bmatrix}#1\end{bmatrix}}
\newcommand{\MA}[2]{\bracket{\begin{array}{#1}#2\end{array}}}
\newcommand{\diag}[1]{\mathrm{Diag}\paren{#1}}
\newcommand{\poly}{\mathrm{poly}}
\newcommand{\rk}[2][]{
    \ifthenelse{\equal{#1}{}}
    {\mathrm{rk}\paren{#2}}
    {\mathrm{rk}_{#1}\paren{#2}}
}
\newcommand{\rowspan}[1]{\mathrm{rowspan}\paren{#1}}
\renewcommand{\ker}[1]{\mathrm{ker}\paren{#1}}
\newcommand{\rref}[1]{\mathrm{rref}\paren{#1}}
\newcommand{\GL}[2]{\mathrm{GL}\paren{#1,\ #2}}
\renewcommand{\Vec}[1]{\mathrm{vec}\paren{#1}}
\renewcommand{\span}[1]{\mathrm{span}\paren{#1}}
\newcommand{\cpdeval}[1]{\bracket{\!\bracket{#1}\!}}
\newcommand{\unfold}[2]{#1_{\paren{#2}}}
\newcommand*\Let[2]{\State #1 $\gets$ #2}
\algrenewcommand\algorithmicrequire{\textbf{Precondition:}}
\algrenewcommand\algorithmicensure{\textbf{Postcondition:}}
\newcommand{\Yield}[1]{\State \textbf{yield} #1}
\newcommand{\mainResult}{O^*(|\F|^{\min\brace{R,\ \sum_{d\ge 2} n_d} + (R-n_0)(\sum_{d\ne 0} n_d)})}
\newcommand{\secondResult}{O^*\paren{|\F|^{n_0+n_2 + (R-n_0+1-r_*)(n_1+n_2)+r_*^2}}}
\newcommand{\rStar}{\floor{\frac{R}{n_0}}+1}
\newcommand{\old}{O^*(|\F|^{n_0+(R-n_0)(\sum_d n_d)})}
\title{Faster search for tensor decomposition over finite fields}
\date{}
\author{Jason Yang}
\begin{document}

\maketitle

\begin{abstract}
% We present an $\mainResult$-time algorithm for determining whether the rank of a concise tensor $T\in\F^{n_0\times\dots\times n_{D-1}}$ is $\le R$, assuming $n_0\ge\dots\ge n_{D-1}$ and $R\ge n_0$. For 3-dimensional tensors, we have a second algorithm running in $\secondResult$ time, where $r_*:=\rStar$. Both algorithms improve on our previous work, which achieves running time $\old$.
We present an $O^*(|\mathbb{F}|^{\min\left\{R,\ \sum_{d\ge 2} n_d\right\} + (R-n_0)(\sum_{d\ne 0} n_d)})$-time algorithm for determining whether the rank of a concise tensor $T\in\mathbb{F}^{n_0\times\dots\times n_{D-1}}$ is $\le R$, assuming $n_0\ge\dots\ge n_{D-1}$ and $R\ge n_0$. For 3-dimensional tensors, we have a second algorithm running in $O^*(|\mathbb{F}|^{n_0+n_2 + (R-n_0+1-r_*)(n_1+n_2)+r_*^2})$ time, where $r_*:=\left\lfloor\frac{R}{n_0}\right\rfloor+1$. Both algorithms use polynomial space and improve on our previous work, which achieved running time $O^*(|\mathbb{F}|^{n_0+(R-n_0)(\sum_d n_d)})$.
\end{abstract}

\section{Introduction}
Given a tensor (multidimensional array) $T\in \ring^{n_0\times\dots\ n_{D-1}}$ over a ground ring $\ring$, a \textit{rank-$R$ canonical polyadic decomposition (CPD)} of $T$ is a list of matrices $A_d\in\ring^{n_d\times R},\ 0\le d<D$ such that \[T = \M{\sum_{0\le r<R} \prod_{0\le d<D} (A_d)_{i_d,r}}_{i_0,\dots,i_{D-1}}.\]

We call the $A_d$ ``factor matrices" and abbreviate the right-hand side as $\cpdeval{A_0,\dots,A_{D-1}}$. Note that this expression is also equal to $\sum_r \bigotimes_d (A_d)_{:,r}$, where $\otimes$ denotes the tensor product.

The rank of $T$, denoted $\rk{T}$, is the smallest $R$ such that there exists a rank-$R$ CPD of $T$. Determining tensor rank is the central problem underlying fast matrix multiplication \cite{blaser2013fast}. Formally, the action of multiplying a $m\times k$ matrix with a $k\times n$ matrix can be represented with a $mk\times kn\times nm$ tensor $\ang{m,k,n}$; then a rank-$R$ CPD of this tensor can be converted into a divide-and-conquer algorithm for multiplying two $N\times N$ matrices in $O(N^{3\log_{mkn} R})$ time. The quantity $3\log_{mkn} R$ is known as the \textit{running time exponent} of such a CPD. The famous Strassen algorithm \cite{strassen1969gaussian} corresponds to a rank-7 CPD of $\ang{2,2,2}$.
Furthermore, every fast matrix multiplication algorithm corresponds to a CPD of some $\ang{m,k,n}$, as long as it is restricted to arithmetic operations \cite{blaser2013fast}.

The asymptotically fastest known algorithm for matrix multiplication \cite{alman2025more} and its predecessors correspond to CPDs of very large $\ang{m,k,n}$ tensors; this is a consequence of applying several algebraic techniques, each of which produces \textit{sequences} of CPDs whose running time exponents converge to some limit. As a result, the constant factors of such algorithms render them impractical \cite{blaser2013fast}, despite new developments to mitigate this issue \cite{alman2025improving}.

An alternative approach to fast matrix multiplication is to directly find low-rank CPDs of small $\ang{m,k,n}$ tensors. Much work has been done in this direction using computer search \cite{smirnov2013bilinear, courtois2011new, heule2019local, fawzi2022discovering, kauers2023flip, kauers2025some, arai2024adaptive, deza2023fast}, which we detail in Section \ref{prior-work}.
We are most interested in the latter approach to fast matrix multiplication, due to impractical constant factors in the former approach. However, we deviate from previous research in two ways:
\begin{enumerate}[1.]
    \item We restrict ourselves to \textit{exact} algorithms which can prove existence or non-existence of a low-rank CPD of a given tensor. Although heuristic search methods such as \cite{courtois2011new, smirnov2013bilinear, heule2019local, fawzi2022discovering, kauers2023flip} have found many novel CPDs, they still have not resolved questions such as whether $\ang{3,3,3}$ has a rank-22 CPD, which has been open since 1976 \cite{laderman1976noncommutative}.
    
    A polynomial-time exact algorithm is unlikely to exist, as tensor rank is known to be NP-hard over a finite field and the rationals \cite{haastad1990tensor}, and undecidable over the integers \cite{shitov2016hard}.
    However, we are still interested in finding asymptotic improvements, as they can significantly expand the set of tensors that can be feasibly analyzed in practice, even if the final time complexity is exponential.

    To ensure an exact algorithm is possible in principle, we restrict ourselves to searching for CPDs over a finite field, so that the search space is finite. Doing so is not too strong of a handicap, as some previous work \cite{fawzi2022discovering, kauers2023flip, kauers2025some} has successfully obtained new algorithms for matrix multiplication by lifting CPDs from a finite field to the integers.

    \item We generalize to arbitrary tensors, not just those for matrix multiplication. Our reasons for doing so are: to find general search optimizations in tensor CPD that may have been overlooked; and to make our algorithms applicable to problems besides matrix multiplication that use CPD, such as the light bulb problem \cite{alman2023generalizations} and dynamic programming \cite{alman2023tensors}.
\end{enumerate}

% Unfortunately, determining the rank of an arbitrary tensor of dimension $D\ge 3$ is notoriously difficult: over a finite field or the rationals, the problem is NP-hard \cite{haastad1990tensor}, and over the integers it is undecidable \cite{shitov2016hard}.
% The tensors $\ang{m,k,n}$ specifically corresponding to matrix multiplication are also challenging to analyze.

% On the other hand, all known CPDs of small $\ang{m,k,n}$ ($m,k,n\le 50$) achieve a running time exponent only slightly lower than Strassen's algorithm at best .
% \textcolor{red}{Due to these obstacles, Strassen's algorithm is still the only sub-cubic algorithm widely used in practice \cite{strassen-faster}.}

% Despite these difficulties, we are still interested in finding faster algorithms for tensor rank. In contrast to most existing work, 

\noindent Formally, we want to solve the following problem:
\begin{problem}
\label{cpd}
Given a tensor $T$ over a finite $\F$, and a rank threshold $R$, do one of the following:
\begin{itemize}
    \item If a rank-$R$ CPD of $T$ exists: return such a CPD;
    \item Else: determine that no such CPD exists.
\end{itemize}
\end{problem}

In a previous work of ours \cite{yang2024depth}, we proved that Problem \ref{cpd} can be solved in $\old$ time and $O^*(1)$ space, assuming without loss of generality (WLOG) that $R\ge n_0\ge \dots \ge n_{D-1}$ and $T$ is \textit{concise}; see Section \ref{conciseness} for more details. The $O^*$ notation ignores polynomial factors in all variables that appear in its expression: for both the time and space bounds of this result, $\poly(n_0,\dots,n_{D-1},D,R,|\F|)$ factors are ignored.

In this paper, we give two improvements on our previous time bound while maintaining polynomial space. The first applies to all tensors:

\begin{theorem}
\label{thm:main}
Given a concise tensor $T\in\F^{n_0\times\dots\times n_{D-1}}$, finding a rank-$R$ CPD of $T$ or determining that none exists can be done in $\mainResult$ time and $O^*(1)$ space.
\end{theorem}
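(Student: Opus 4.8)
The plan is to recast rank-$R$ decomposability as a covering statement — that a fixed $n_0$-dimensional subspace lies in the span of $R$ rank-one tensors — and then to search for such a cover with two nested loops whose running times multiply to the stated bound. Identify $\F^{n_1\times\cdots\times n_{D-1}}$ with $\F^{n_1\cdots n_{D-1}}$ by flattening and set $V:=\rowspan{\unfold{T}{0}}$; conciseness forces $\dim V=n_0$. I claim $\rk{T}\le R$ if and only if $V\subseteq\span{v_1,\dots,v_R}$ for some rank-$\le 1$ tensors $v_1,\dots,v_R\in\F^{n_1\times\cdots\times n_{D-1}}$. Indeed, a rank-$R$ CPD $\cpdeval{A_0,\dots,A_{D-1}}$ satisfies $\unfold{T}{0}=A_0 M^{T}$, where the $r$-th column of $M$ is $\Vec{(A_1)_{:,r}\otimes\cdots\otimes(A_{D-1})_{:,r}}$, so $V\subseteq\clmspan{M}$; conversely, if $V\subseteq\span{v_1,\dots,v_R}$ then with $M:=\bracket{\Vec{v_1}\mid\cdots\mid\Vec{v_R}}$ every row of $\unfold{T}{0}$ lies in $\clmspan{M}$, hence $\unfold{T}{0}=CM^{T}$ for some $C\in\F^{n_0\times R}$, and unflattening the factors of the $v_r$ gives a rank-$R$ CPD $\cpdeval{C,A_1,\dots,A_{D-1}}$ of $T$.

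The algorithm enumerates every tuple $F=(v_{n_0+1},\dots,v_R)$ of $R-n_0$ rank-$\le 1$ tensors, each specified by its $D-1$ factor vectors, at a cost of $|\F|^{(R-n_0)\sum_{d\ne0}n_d}$. For a fixed $F$ put $W:=\span{F}$ and $S:=V+W$, so that $\dim S\le n_0+(R-n_0)=R$ and $F\subseteq W\subseteq S$; let $S_1$ denote the set of rank-$\le 1$ tensors contained in $S$. One tests whether $V\subseteq\span{S_1}$ — equivalently, since $W\subseteq\span{S_1}$, whether $S=\span{S_1}$ — and if so one extracts $\le R$ elements of $S_1$ spanning $S$, pads this list to length $R$ with zero tensors, builds $M$ from their flattenings, solves $\unfold{T}{0}=CM^{T}$ for $C$, and outputs the CPD; if no $F$ passes, one reports $\rk{T}>R$. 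The crucial observation is that $\span{S_1}$ admits two computations: scan all $|\F|^{\dim S}\le|\F|^{R}$ points of $S$ and test each for rank $\le 1$; or, since every rank-$\le 1$ tensor is $w^{(1)}\otimes\cdots\otimes w^{(D-1)}$, scan the $|\F|^{\sum_{d\ge2}n_d}$ ``tails'' $(w^{(2)},\dots,w^{(D-1)})$ and, for each, intersect $S$ with the $(\le n_1)$-dimensional space $\F^{n_1}\otimes w^{(2)}\otimes\cdots\otimes w^{(D-1)}$ by linear algebra. Using whichever is cheaper, the inner step costs $|\F|^{\min\{R,\sum_{d\ge2}n_d\}}$ up to polynomial factors and runs in polynomial space, so the algorithm runs in $\mainResult$ time and $O^*(1)$ space.

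Soundness is immediate: if the test passes for some $F$ then $S=\span{S_1}$ is the span of $\dim S\le R$ rank-$\le 1$ tensors and contains $V$, so $\rk{T}\le R$ by the reduction, and the extracted CPD is valid. For completeness, suppose $\rk{T}\le R$ and pick rank-$\le 1$ tensors with $V\subseteq\Pi:=\span{v_1,\dots,v_R}$; set $k:=\dim\Pi$, so $n_0\le k\le R$. Choose indices $i_1,\dots,i_{k-n_0}$ for which the images of $v_{i_1},\dots,v_{i_{k-n_0}}$ form a basis of $\Pi/V$, and let $F$ consist of these tensors together with $R-k$ zero tensors, so $|F|=R-n_0$ and $V+\span{F}=\Pi=S$. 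Then $S_1\supseteq\{v_1,\dots,v_R\}$, which spans $\Pi\supseteq V$, so the test passes for this $F$, and the enumeration reaches it. The main difficulty is establishing this equivalence cleanly — in particular the completeness construction and the claim that $\span{S_1}$, hence the decision ``$V\subseteq\span{S_1}$'', can be carried out within the $\min\{R,\sum_{d\ge2}n_d\}$ budget in polynomial space — together with the routine degenerate cases (small $D$, $\dim\Pi<R$, zero factor vectors).
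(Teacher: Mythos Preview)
Your argument is correct and implements the same algorithm as the paper: your outer enumeration of the $(R-n_0)$-tuple $F$ of rank-$\le 1$ tensors is exactly the paper's enumeration of $(Y_d)_{d\ge 1}$, and your two inner methods for computing $\span{S_1}$ (scan all $\le|\F|^R$ points of $S=V+W$, or fix the tail $(w^{(2)},\dots,w^{(D-1)})$ and solve a linear system) are precisely the paper's two methods (enumerate $(v,c)$, or fix $(u_2,\dots,u_{D-1})$ and solve for $(v,c,u_1)$). The only difference is packaging: the paper works with row vectors $v$ and tests whether $\{v:\exists c,\ \rk{v\times_0 T-\cpdeval{c,Y_1,\dots,Y_{D-1}}}\le 1\}$ spans $\F^{n_0}$, whereas you pass to the slice space $V=\rowspan{\unfold{T}{0}}$ and test the equivalent condition $V\subseteq\span{S_1}$ --- a cleaner geometric rephrasing (and your completeness argument via a basis of $\Pi/V$ is slightly slicker than the paper's column-permutation of $\rref{A_0}$), but not a different route.
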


The second improvement only applies to tensors of dimension $D=3$ --- see Section \ref{main2} for comparison against Theorem \ref{thm:main}:

\begin{theorem}
\label{thm:main2}
Given a concise tensor $T\in\F^{n_0\times n_1\times n_2}$, finding a rank-$R$ CPD of $T$ or determining that none exists can be done in $\secondResult$ time and $O^*(1)$ space, where $r^*:=\rStar$.
\end{theorem}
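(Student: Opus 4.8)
The plan is to refine the approach behind Theorem \ref{thm:main} by exploiting the structure of a CPD of a 3-dimensional tensor more carefully. Write a putative rank-$R$ CPD as $T = \cpdeval{A_0, A_1, A_2}$ with $A_d \in \F^{n_d \times R}$. The general strategy of \cite{yang2024depth} and Theorem \ref{thm:main} is to guess part of the factor matrices, reduce the remaining unknowns to a linear-algebraic condition, and enumerate over the guessed portion; the total cost is $|\F|$ raised to (number of guessed field elements), up to polynomial factors. For the first $n_0$ columns, conciseness lets us assume $A_0$ restricted to those columns is invertible, so after a change of basis we may take it to be the identity, contributing nothing to the exponent. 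The novelty here is to handle the remaining $R - n_0$ columns in \emph{blocks} of size $n_0$: we group them into roughly $r_* = \floor{R/n_0} + 1$ groups, and within each group we again use a conciseness/rank argument to normalize one sub-block cheaply (paying $r_*^2$ total for the normalization overhead across groups, e.g.\ for the small residual invertible blocks), while each of the remaining columns costs $n_1 + n_2$ guessed entries (its $A_1$- and $A_2$-columns), and there are about $(R - n_0 + 1 - r_*)(n_1 + n_2)$ such columns. Adding the unavoidable $n_0 + n_2$ term (from normalizing $A_0$ on the last $n_2$ coordinates, or from a slice/flattening argument) yields the claimed exponent $n_0 + n_2 + (R - n_0 + 1 - r_*)(n_1+n_2) + r_*^2$.

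Concretely, the steps I would carry out are: (1) Set up the CPD equations and use conciseness of $T$ (Section \ref{conciseness}) to put $A_0$ into a normal form where its first $n_0$ columns form the identity and argue that columns $n_0, \ldots, R-1$ of $A_0$ can be partitioned so that each block of $n_0$ consecutive columns, together with appropriate earlier columns, can be reduced modulo the group action $\GL{n_0}{\F} \times \GL{n_1}{\F} \times \GL{n_2}{\F}$ and column permutations; (2) quantify exactly how many field elements must be brute-forced in this normalization — this is where the $r_*^2$ term and the $+1-r_*$ shift come from; (3) show that once $A_0$ is fixed, determining whether compatible $A_1, A_2$ exist reduces to solving a system that is linear in the entries of one of $A_1, A_2$ after guessing the other (or reduces to a rank condition / simultaneous linear system), so it can be checked in polynomial time — this is the standard ``linearization'' step from \cite{yang2024depth}; (4) assemble the cost bound and verify the space bound is $O^*(1)$ because the enumeration is done with a polynomial-space iterator and each inner check uses polynomial space; (5) in the ``yes'' case, undo the change of basis to report an actual CPD.

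The main obstacle I expect is step (2): pinning down the precise block structure of the columns of $A_0$ and proving that the normalization — deciding which $n_0 \times n_0$ sub-blocks can be taken invertible and reduced to identity (or to canonical coset representatives) versus which columns must be fully enumerated — achieves exactly the exponent $n_0 + n_2 + (R-n_0+1-r_*)(n_1+n_2) + r_*^2$ rather than something weaker. This requires a careful counting argument, likely an induction on the $r_*$ blocks, tracking how much ``free'' normalization budget remains after each block and handling boundary cases when $R$ is not a multiple of $n_0$ (so the last block is short, of size $R - n_0(r_*-1)$, which is where the quadratic $r_*^2$ correction genuinely arises). A secondary subtlety is ensuring the linearization in step (3) still goes through after the more aggressive normalization — i.e.\ that fixing $A_0$ in this block-normalized form does not destroy the property that the residual problem in $A_1, A_2$ is efficiently solvable — but this should follow from the same slice-rank / Kronecker-product reasoning used in the proof of Theorem \ref{thm:main}, specialized to $D = 3$.
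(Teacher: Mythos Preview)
Your proposal has a genuine gap: the mechanism you describe is not the one that produces the exponent, and I do not see how to make your mechanism work.

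You interpret $r_*=\floor{R/n_0}+1$ as a count of column-blocks of $A_0$ and propose to normalize each block of $n_0$ columns of $A_0$ (beyond the first) to something like the identity, paying $r_*^2$ for the residual invertible pieces. But conciseness only guarantees that $A_0$ as a whole has rank $n_0$; it says nothing about any particular block of $n_0$ columns among the last $R-n_0$ columns, which may have arbitrarily small rank. So there is no ``free'' normalization available for those blocks, and no apparent reason the cost would collapse to the stated exponent.

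In the paper, $r_*$ has a completely different meaning: it is the smallest $r$ such that the set $\{v\in\F^{1\times n_0}:\rk{v\times_0 T}\le r\}$ spans $\F^{n_0}$, and one argues that $r_*>R/n_0$ (else a rank-$n_0 r_*$ CPD already exists). The algorithm then fixes a single vector $v$ with $\rk{v\times_0 T}\ge r_*$ and observes that the condition $\rk{v\times_0 T-\cpdeval{c,Y_1,Y_2}}\le 1$ is a \emph{matrix factorization} $v\times_0 T=W_1 W_2$ with $W_i$ built from $Y_i$, $a$, $b$. The core engine is a lemma (Lemma~\ref{rankfac}) counting and enumerating all factorizations $M=UV$ of a fixed $m\times n$ matrix $M$ of rank $r$ through a $k$-dimensional middle space in $O^*\big(|\F|^{(k-r)(m+n)+r^2}\big)$ time; this is where the $r_*^2$ term and the $(R-n_0+1-r_*)(n_1+n_2)$ term actually come from, with $k=R-n_0+1$ and $r\ge r_*$. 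The $n_0$ in the exponent is from enumerating $v$, and the $n_2$ is from constructing the set $S$ as in Theorem~\ref{thm:main}. None of this involves blocking the columns of $A_0$.
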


For both algorithms, the polynomial overhead for both time and space is approximately $R\prod_d n_d$, due to row reduction and computing tensor contractions (defined in Section \ref{notation}).

As an example of these improvements, consider $T=\ang{3,3,3}$, the tensor of multiplying two $3\times 3$ matrices. This tensor is the smallest matrix multiplication tensor involving square matrix shapes whose rank is not known; the best bounds on its rank over an arbitrary field are $\ge 19$ \cite{blaser2003complexity} and $\le 23$ \cite{laderman1976noncommutative}. Consider setting the rank threshold to $R=19$ and the searching over $\F_2$; then the exponential part of the time complexity of brute force is $2^{3\cdot 9\cdot 19}=2^{513}$, of our previous work \cite{yang2024depth} is $2^{9+(19-9)\cdot 27}=2^{279}$, of Theorem \ref{thm:main} is $2^{9+(19-9)\cdot 18}=2^{189}$, and of Theorem \ref{thm:main2} is $2^{9+9+(19-9+1-2)\cdot 18 + 2^2}=2^{184}$. Polynomial and constant factors are comparable across these algorithms.

We note that our results are incomparable with the following \textit{meet-in-the-middle} (MITM) technique: generate a set $S$ containing all tensors created by a rank-$\floor{\frac{R}{2}}$, then enumerate every rank-$\ceil{\frac{R}{2}}$ CPD $\cpdeval{B_0,\dots,B_{D-1}}$ and query whether $T-\cpdeval{B_0,\dots,B_{D-1}}\in S$. By implementing $S$ as any set data structure supporting logarithmic-time queries (e.g., a self-balancing binary tree), MITM runs in $O^*\paren{|\F|^{\ceil{\frac{R}{2}}\sum_d n_d}}$ time, which is faster than Theorems \ref{thm:main} and \ref{thm:main2} when $R$ is large; however, it requires $O^*\paren{|\F|^{\floor{\frac{R}{2}}\sum_d n_d}}$ space, rendering it impractical for all but the smallest tensors.

\subsection{Prior work}
\label{prior-work}
Below is a non-exhaustive list of existing methods to find CPDs of small matrix multiplication tensors $\ang{m,k,n}$:

\begin{itemize}
    \item By hand: Strassen's algorithm was extended by Hopcroft and Kerr \cite{hopcroft1971minimizing} to prove $\rk{\ang{m,2,n}}\le \ceil{\frac{3mn+\max\brace{m,n}}{2}}$.
    Laderman \cite{laderman1976noncommutative} proved $\rk{\ang{3,3,3}}\le 23$, which today is still the best known upper bound for $\ang{3,3,3}$. Many other upper bounds of small $\ang{m,k,n}$ are listed on Sedoglavic's table \cite{sedoglavic2019yet}.
    
    However, the most successful manual approach has been Pan's trilinear aggregation \cite{pan1978strassen, pan1982trilinear}, consisting of several upper bounds on $\rk{\ang{n,n,n}}$. The lowest running time exponent obtained by this approach is $\approx 2.7734$, from the rank bound $\rk{\ang{44,44,44}}\le 36133$.

    \item Numerical optimization: Smirnov \cite{smirnov2013bilinear} converts tensor CPD into an optimization problem whose objective is to minimize the sum-of-squares error between the target tensor $T$ and a CPD $\cpdeval{A,B,C}$ of fixed rank. The primary optimization technique is \textit{alternating least squares}, where one cycles through each factor matrix $A,B,C$, and analytically minimizes the objective with respect to that matrix while fixing all other factor matrices. Together with clever regularization, along with post-processing (to convert a sufficiently accurate CPD into an exact CPD), Smirnov proved $\rk{\ang{3,3,6}}\le 40$ over the rationals (running time exponent $\approx 2.7743$), nearly surpassing trilinear aggregation \cite{pan1982trilinear}.
    
    \item Boolean SAT: several works, such as Courtois et. al. \cite{courtois2011new} and Heule et. al. \cite{heule2019local}, have searched for CPDs over the finite field $\F_2$ by formulating the problem as boolean SAT, specifically to try finding a rank-22 CPD of $\ang{3,3,3}$. Although unsuccessful, thousands of new rank-23 CPDs were discovered (over $\F_2$) that are inequivalent to Laderman's CPD.

    \item Reinforcement learning: Fawzi et. al. \cite{fawzi2022discovering} (AlphaTensor) formulates tensor CPD as a single-player game where one starts with a given tensor, a move consists of subtracting away a rank-1 tensor, and the goal is to end with all-zeros in as few moves as possible.
    Using a neural network together with Monte-Carlo tree search, several new CPDs were found, including $\rk{\ang{3,4,5}}\le 47$ over the integers (running time exponent $\approx 2.8211$) and $\rk{\ang{4,4,4}}\le 47$ over $\F_2$ (running time exponent $\approx 2.7773$); the latter result could not be lifted to the integers.

    \item Flip graphs: Kauers and Moosbauer \cite{kauers2023flip, kauers2025some} begin with an arbitrary CPD $T=\sum_r a_r\otimes b_r\otimes c_r$, then (when possible) iteratively apply the following ``flip" transformation to a randomly chosen pair of summands with matching factors in the first axis,

    \[a\otimes b\otimes c + a\otimes b'\otimes c'\]
    \[\rightarrow a\otimes (b+b')\otimes c + a\otimes b'\otimes (-c+c'),\]

    or an equivalent transformation with the tensor axes and the order of the summands permuted.
    It is worth noting that this is essentially an elementary row operation in a matrix factorization.
    
    After each iteration, a CPD is also ``reduced" when possible (detected using some conditions involving linear span) so that its rank decreases by 1.
    
    % if there exists a subset of summands $\brace{a_r\otimes b_r\otimes c_r:r\in S}$ such that all $a_r$ are scalar multiples of each other and the set $\brace{b_r\otimes c_r:r\in S}$ is not linearly independent, replace them with a new subset of $|S|-1$ many summands that evaluate to the same sum.

    In their first paper \cite{kauers2023flip}, Kauers and Moosbauer found a rank-95 CPD of $\ang{5,5,5}$ over $\F_2$ by initializing their search at the rank-96 CPD discovered by AlphaTensor \cite{fawzi2022discovering}.
    % Additionally, they proved $\rk{\ang{5,5,5}}\le 97$ over the integers by lifting a different CPD from $\F_2$.
    In a subsequent paper \cite{kauers2025some}, the authors proved $\rk{\ang{2,6,6}}\le 56$ over the integers by lifting a CPD from $\F_2$, making the first improvement on $\ang{2,6,6}$ since Hopcroft and Kerr \cite{hopcroft1971minimizing}.

    A variant of this search procedure, called ``adaptive flip graphs", was introduced by Arai et. al. \cite{arai2024adaptive}. This procedure occasionally \textit{increases} the rank of a CPD via the transformation
    \[a\otimes b\otimes c \rightarrow a'\otimes b\otimes c + (a-a')\otimes b\otimes c,\]

    % immediately followed by a flip transformation,
    which prevents the search from getting stuck at CPDs that either have no flip transformations or cannot be reduced. Along with extra search constraints, the authors of \cite{arai2024adaptive} found a rank-94 CPD of $\ang{5,5,5}$ over $\F_2$ (running time exponent $\approx 2.8229$).

    \item Constraint programming: closest to the spirit of our work, Deza et. al. \cite{deza2023fast} use the IBM ILOG CP Optimizer to exhaustively search for CPDs with elements in $\brace{-1,0,1}$, with symmetry-breaking constraints added that do not affect correctness (e.g., forcing lexicographic order among the summands $a_r\otimes b_r\otimes c_r$). Although the authors do not find new upper bounds on tensor rank, they do recover the previously known bounds $\rk{\ang{2,2,2}}\le 7$ and $\rk{\ang{2,2,3}}\le 11$.
\end{itemize}

Despite impressive progress on finding new CPDs for small tensors $\ang{m,k,n}$, ironically the best result is still Pan's latest trilinear aggregation result \cite{pan1982trilinear} from 1982. Furthermore, because none of the listed methods are exhaustive, except constraint programming, it is impossible to say whether the currently known best upper bounds really are optimal or whether the search methods just need to be run for longer.

% note: even though {-1,0,1} can be modded by 2, constraint programming is not a superset of searching over mod 2, because the *arithmetic* is not done mod 2
Even Deza et. al.'s constraint programming \cite{deza2023fast} is not quite exhaustive, because of the rather artificial restriction that all elements in a CPD are in $\brace{-1,0,1}$ (with arithmetic over the integers). On the other hand, searching over a finite field has the advantage that if a CPD of a specific tensor with fixed rank is not found, that rules out the existence of such a CPD over the the entire ring of integers (since there exists a ring homomorphism from the integers to any finite field of prime order).

\subsection{Notation}
\label{notation}
\begin{itemize}
    \item Tensor slices are denoted with NumPy notation.
    \begin{itemize}
        \item e.g., for a two-dimensional tensor $A$: $A_{r,:}$ is the $r$-th row of $A$; $A_{:,:c}$ denotes the submatrix containing the first $c$ columns of $A$.
    \end{itemize}

    \item The tensor product of $A\in\ring^{n_0\times\dots\times n_{D-1}}$ and $B\in\ring^{s_0\times\dots\times s_{E-1}}$
    is
    \[A\otimes B:=\M{A_{i_0,\dots,i_{D-1}} B_{j_0,\dots,j_{E-1}}}_{i_0,\dots,i_{D-1}, j_0,\dots,j_{E-1}}.\]

    The chain tensor product $A_0\otimes\times\otimes A_{n-1}$ is denoted as $\bigotimes_{0\le i<n} A_i$.
    
    \item The axis-$d$ contraction of $T\in\ring^{n_0\times\dots\times n_{D-1}}$ by $M\in\ring^{n'_d \times n_d}$ is
    \[M\times_d T:=\M{\sum_{i_d} M_{i'_d,i_d} T_{_{i_0,\dots,i_{D-1}}}}_{i_0,\dots,i_{d-1}, i'_d, i_{d+1},\dots,i_{D-1}}.\]

    \item A rank-$R$ CPD is a list of factor matrices $A_d\in\ring^{n_d\times R},\ 0\le d<D$ that evaluates to the tensor $\cpdeval{A_0,\dots,A_{D-1}}:=\sum_r \bigotimes_d (A_d)_{:,r}$.
    \begin{itemize}
        \item It is straightforward to check that $M\times_d \cpdeval{A_0,\dots,A_{D-1}}
        \\=\cpdeval{A_0,\dots,A_{d-1},MA_d,A_{d+1},\dots,A_{D-1}}$.
    
        \item For brevity, we notate a CPD as ``$T=\cpdeval{A_0,\dots,A_{D-1}}$" instead of ``$(A_d)_d$ such that $T=\cpdeval{A_0,\dots,A_{D-1}}$".
    \end{itemize}

    \item $\rref{M}$ denotes the reduced row echelon form of matrix $M$, the unique matrix $M'$ such that:
    \begin{itemize}
        \item there exists some invertible $Q$ such that $M'=QM$
        \item the first $r$ rows of $M'$ are nonzero, where $r:=\rk{M}$, and all other rows are all-zero;
        \item the leftmost element of each nonzero row (the ``leading element" of that row) is a 1;
        \item the leading element of row $i$ is strictly to the right of the leading element of row $i-1$, for $1\le i<r$;
        \item each leading element is the only nonzero element in its column;
    \end{itemize}

    For example, the following matrix is in reduced row echelon form, where each $*$ can be replaced with an arbitrary value:
    \[\M{1&0&*&0&*\\0&1&*&0&*\\0&0&0&1&*\\0&0&0&0&0}\]

    \item $\GL{n}{\F}$ denotes the set of invertible $n\times n$ matrices with elements in the field $\F$.

    \item $\diag{v}$ denotes the diagonal matrix $\M{v_0\\&\ddots\\&&v_{n-1}}$.

    \item When writing tensor equations, we may ignore the presence of extra axes along which a tensor has length 1.
    \begin{itemize}
        \item e.g., if $M$ is a $m\times n$ matrix and $T$ is a $1\times m\times n$ tensor such that $T_{0,:,:}=M$, we may write $T=M$.
    \end{itemize}

    % \item ``s.t." is short for ``such that".
\end{itemize}

\section{Conciseness}
\label{conciseness}
Before describing our algorithms, we first need to preprocess the input tensor $T\in\F^{n_0\times\dots\times n_{D-1}}$ so that it is \textit{concise}. Call $T$ \textit{axis-$d$ concise} if the set $\brace{T_{:,\dots,:,i_d,:,\dots,:}:0\le i_d<n_d}$ is linearly independent, where $i_d$ is indexing at axis $d$; and call $T$ \textit{concise} if it is axis-$d$ concise for all $0\le d<D$.

Construct the following objects for each axis $d$:
\begin{itemize}
    \item the \textit{axis-$d$ unfolding} of $T$, $\unfold{T}{d}:=\MA{c}{\vdots\\\hline \Vec{T_{:,\dots,:,i_d,:,\dots,:}}\\\hline \vdots}_{i_d}$, where $\Vec{}$ flattens a tensor in some consistent manner (e.g., row-major order);
    
    \item an arbitrary matrix $Q_d\in\GL{n_d}{\F}$ such that $Q_d \unfold{T}{d} = \rref{\unfold{T}{d}}$; such $Q_d$ can be constructed in polynomial time via row reduction;
    
    \item the axis-$d$ rank of $T$, $r_d:=\rk{\unfold{T}{d}}$;
\end{itemize}

Then define $T'=(Q_{D-1})_{:r_{D-1},:}\times_{D-1} \paren{\dots \paren{(Q_0)_{:r_0,:}\times_0 T}}$. It is clear from construction that $T'$ is concise.
Additionally, any CPD of $T=\cpdeval{A_0,\dots,A_{D-1}}$ can be converted into a CPD of $T'=\big[\!\big[(Q_0)_{:r_0,:}A_0,\dots,$ $(Q_{D-1})_{:r_{D-1},:}A_{D-1}\big]\!\big]$ consisting of the same rank. Finally, $T=(Q_{D-1}^{-1})_{:,:r_{D-1}}\times_{D-1} \paren{\dots \paren{(Q_0^{-1})_{:,:r_0}\times_0 T'}}$, so the reverse process is also possible. Thus, searching for a rank-$R$ CPD of $T'$ is equivalent to searching such a CPD for $T$.

The above process shows that if there exists a rank-$R$ CPD $T=\cpdeval{A_0,\dots,A_{D-1}}$, then $R\ge \max_d n_d$; this is because $T_{(d)}=A_d \MA{c}{\vdots \\\hline \Vec{\otimes_{d'\ne d} (A_{d'})_{:,r}} \\\hline \vdots}_{r}$, which has rank $r_d$ by definition, but also has rank $\le R$ as a consequence of the matrix product.

As a corollary, if the rank threshold $R$ were set to $1$, then the existence of a rank-1 CPD for $T$ implies $\forall d:r_d\le 1$. But then $\prod_d r_d\le 1$, so $T'$ clearly has rank $\le 1$, and thus so does $T$, yielding the following lemma:

\begin{lemma}
\label{rank1}
Given an arbitrary tensor $T\in\F^{n_0\times\dots\times n_{D-1}}$, returning a rank-1 CPD of $T$ or determining that no such CPD exists can be done in $O^*(1)$ time.
\end{lemma}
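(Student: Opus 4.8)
The plan is to observe that the existence of a rank-$1$ CPD of $T$ is equivalent to a condition that the conciseness preprocessing of Section~\ref{conciseness} already computes --- namely that every unfolding has rank at most $1$ --- and then to read off an explicit rank-$1$ CPD from that preprocessing when the condition holds.

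The first step is to prove the equivalence
\[
\text{$T$ has a rank-$1$ CPD}\iff r_d:=\rk{\unfold{T}{d}}\le 1\ \text{for all}\ 0\le d<D.
\]
The ``$\Rightarrow$'' direction is the rank bound already noted just before the lemma: if $T=\cpdeval{A_0,\dots,A_{D-1}}$ with each $A_d\in\F^{n_d\times 1}$, then $\unfold{T}{d}$ equals $A_d$ times a single row, hence has rank $\le 1$. For ``$\Leftarrow$'', assume every $r_d\le 1$. If some $r_d=0$ then $\unfold{T}{d}=0$, so $T$ is the zero tensor and $T=\cpdeval{\mathbf{0}_{n_0\times 1},v,\dots,v}$ is a (degenerate) rank-$1$ CPD for any fixed vector $v$. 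Otherwise $r_d=1$ for all $d$, so the concise core $T'$ of Section~\ref{conciseness} has shape $1\times\dots\times 1$; writing its single entry as $c\in\F$ we have $T'=\cpdeval{[c],[1],\dots,[1]}$, and undoing the conciseness reduction with the matrices $Q_d\in\GL{n_d}{\F}$ yields the explicit rank-$1$ CPD $T=\cpdeval{c\,(Q_0^{-1})_{:,:1},(Q_1^{-1})_{:,:1},\dots,(Q_{D-1}^{-1})_{:,:1}}$.

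The algorithm is then immediate: form each $\unfold{T}{d}$, row-reduce it to obtain $r_d$ and an invertible $Q_d$, and if $\max_d r_d\ge 2$ report that no rank-$1$ CPD exists, otherwise output whichever of the two explicit CPDs above applies. Every step is row reduction or a constant number of tensor contractions on data of size polynomial in that of $T$ (an array with $\prod_d n_d$ entries over $\F$), so the total time is $O^*(1)$. Note also that nothing here uses the WLOG assumption $R\ge n_0\ge\dots\ge n_{D-1}$, which is why the statement applies to an arbitrary input tensor.

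There is no genuinely hard step; the only part requiring a little care is the boundary bookkeeping --- ensuring that a tensor of rank $0$ (i.e.\ $T=0$) is still returned as an honest rank-$1$ CPD with exactly one summand, and that the degenerate core of shape $1\times\dots\times 1$ (or, when $D=1$, of shape $(1)$) is reconstructed correctly via the inverse conciseness maps.
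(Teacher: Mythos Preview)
Your proposal is correct and follows essentially the same approach as the paper: reduce to the concise core via the preprocessing of Section~\ref{conciseness}, observe that a rank-$1$ CPD exists iff every $r_d\le 1$ (so $\prod_d r_d\le 1$ and the core $T'$ is at most $1\times\dots\times 1$), and lift back. You spell out the zero-tensor edge case and the explicit reconstruction more carefully than the paper does, but the argument is the same.
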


\begin{remark}
Lemma \ref{rank1} can be implemented in $O(D \prod_d n_d)$ time, by unfolding $T$ along each axis $d$ and using Gaussian elimination with early termination to detect if $T_{(d)}$ has rank $\le 1$.
\end{remark}

\section{Algorithms}
We first review our previous algorithm \cite{yang2024depth}, then discuss improvements.

Let $T\in\F^{n_0\times\dots\times n_{D-1}}$ be a concise tensor and $R$ the rank threshold. WLOG we can permute axes of $T$ so that $n_0\ge \dots \ge n_{D-1}$. For nontriviality, assume $\forall d: n_d\ge 2$.

Suppose $T$ has a rank-$R$ CPD $\cpdeval{A_0,\dots,A_{D-1}}$. By performing row reduction on $A_0$, there exists some $Q\in\GL{n_0}{\F}$ such that $QA_0=\rref{A_0}$. Since $T$ is concise, we must have $\rk{A_0}=n_0$, so we can force $\rref{A_0}=\MA{c|c}{I_{n_0} & \dots}$ after simultaneously permuting columns of all $A_d$, where the $\dots$ is arbitrary.

Split all factor matrices by the $n_0$-th column: denote $X_d:=(A_d)_{:,:n_0},\ Y_d:=(A_d)_{:,n_0:}$. Suppose all $Y_d$ were fixed; abbreviating $T_Y:=\cpdeval{Y_0,\dots,Y_{D-1}}$, we have

\[Q\times_0 \paren{T-T_Y}=\cpdeval{I_{n_0},X_1,\dots X_{D-1}}.\]

Because each row of $I_{n_0}$ contains exactly one nonzero element at a distinct column, this equation is equivalent to the condition
\[\forall i: \rk{(Q\times_0 (T-T_Y))_{i,:,\dots,:}}\le 1.\]

Since $(Q\times_0 (T-T_Y))_{i,:,\dots,:}=Q_{i,:}\times_0 (T-T_Y)$, and a square matrix is invertible if and only if its rows are linearly independent, $Q$ exists if and only if the set
\[B:=\brace{v\in\F^{1\times n_0} : \rk{v\times_0 (T-T_Y)}\le 1}\]
contains $n_0$ many linearly independent vectors.

By enumerating $v$ and using Lemma \ref{rank1}, $B$ can be constructed in $O^*(|\F|^{n_0})$ time. Then finding $n_0$ many linearly independent row vectors in $B$ (or determining that no such row vectors exist) can be done in $O(\poly(n_0,|B|))=O^*(|\F|^{n_0})$ time via row reduction, which solves $Q$.

Repeating the above procedure for each assignment of $(Y_d)_d$ makes the algorithm run in $\old$ time.
The polynomial overhead comes from computing tensor contractions, detecting if tensors have rank $\le 1$, and extracting linearly independent vectors in $B$.

There are further optimizations one can apply, such as forcing each $Y_d$ for $d\ne 0$ to have normalized columns and forcing the CPD summands $\otimes_d (Y_d)_{:,r}$ to be lexicographically ordered with respect to $r$. We omit these optimizations in our analysis for simplicity and to focus on improving the dominant exponential part of time complexity.
Additionally, it is not clear how to apply lexicographic ordering to our algorithm for Theorem \ref{thm:main2}, described in Section \ref{main2}.
However, we do apply these optimizations in our computational experiments (Section \ref{sec:experiments}).

\subsection{Main improvement}
\label{main}
Now we prove Theorem \ref{thm:main}. The main idea of the proof is to modify the previous algorithm to avoid fixing $Y_0$.

First, we expand $Q\times_0 (T-T_Y)=Q\times_0 T - Q\times_0 T_Y=Q\times_0 T - \cpdeval{QY_0,Y_1,\dots,Y_{D-1}}$ and substitute $C:=QY_0$ for a variable matrix $C\in\F^{n_0\times (R-n_0)}$. Because $Q$ is required to be invertible, any solution to this new equation containing $C$ can be converted into a solution to the old equation containing $Y_0$, and vice versa, so we do not need to place any constraints on $C$.

Solving for $Q$ is thus equivalent to having \[\forall i: \rk{Q_{i,:}\times_0 T - \cpdeval{C_{i,:},Y_1,\dots,Y_{D-1}}}\le 1.\] Since each inner condition only uses $C_{i,:}$ and not the entirety of $C$, such $Q$ exists if and only if the set 
\[S:=\Big\{v\in\F^{1\times n_0} : \exists c\in\F^{1\times (R-n_0)} \textrm{ s.t. }\]
\[\rk{v\times_0 T - \cpdeval{c,Y_1,\dots,Y_{D-1}}}\le 1\Big\}\]
contains $n_0$ many linearly independent vectors. If so, a rank-$R$ CPD of $T$ can be reconstructed in polynomial time, provided that for each $v$ we also keep track of a feasible $c$.
% $v^{(0)},\dots,v^{(n_0-1)}$, because if we also have their corresponding $c$-vectors $c^{(0)},\dots,c^{(n_0-1)}$, we can assign $Q\gets \MA{c}{v^{(0)}\\\hline \vdots \\\hline v^{(n_0-1)}},\ C\gets \MA{c}{c^{(0)}\\\hline \vdots \\\hline c^{(n_0-1)}}$ and construct a rank-$R$ CPD of $T$.

To solve for $Q$ efficiently, the obvious method is to construct $S$ directly by enumerating all possible pairs $(v,c)$; doing so takes $O^*(|\F|^{n_0} |\F|^{R-n_0})=O^*(|\F|^R)$ time.

A different method to solve $Q$ runs in $O^*(|\F|^{\sum_{d\ge 2} n_d})$ time; note that this summation omits $d=1$ as well as $d=0$.
The condition $\rk{v\times_0 T - \cpdeval{c,Y_1,\dots,Y_{D-1}}}\le 1$ is equivalent to
\[\exists u_d\in\F^{n_d} \textrm{ s.t. } v\times_0 T - \cpdeval{c,Y_1,\dots,Y_{D-1}}=\bigotimes_{d\ge 1} u_d.\]

If we fix $u_2,\dots,u_{D-1}$, this equation becomes a linear system over $(v,c,u_1)$. Solving for these remaining variables is equivalent to finding the nullspace of the linear map $f(v,c,u_1):=v\times_0 T - \cpdeval{c,Y_1,\dots,Y_{D-1}}-\bigotimes_{d\ge 1} u_d$. Using Gaussian elimination, we can find a set of basis solutions $P:=\brace{p_0,\dots,p_{k-1}}$, for some $k\le n_0$, such that $\span{P}=\ker{f}$.
% Using Gaussian elimination, we can find a complete parameterization of all solutions $(v,c,u_1)$ in polynomial time. %; in fact this is a special case of Lemma \ref{linsys}.
% Keeping only the $v$ part, we have a vector $\~{v}\in\F^{1\times n_0}$ and a matrix $\~{M}\in\F^{k\times n_0}$, for some $k\le n_0$, such that the set $\brace{\~{v}+x\~{M}:x\in\F^{1\times k}}$ is precisely the set of all $v$ such that $\exists c,u_1 \textrm{ s.t. } v\times_0 T - \cpdeval{c,Y_1,\dots,Y_{D-1}}=\bigotimes_{d\ge 1} u_d$, for the particular assignment of $u_2,\dots,u_{D-1}$ we have fixed.

Iterating over all possible assignments of $U:=(u_2,\dots,u_{D-1})$, we have $S=\bigcup_U \span{\brace{v:(v,c,u_1)\in P_U}}$, where $P_U$ denotes the set $P$ constructed from a specific assignment of $U$. Constructing $S$ directly is costly, but since we only need to extract a sufficient number of linearly independent vectors in $S$, it suffices to replace $S$ with a subset $S'\subseteq S$ such that $\span{S'}=\span{S}$. Setting $S':=\bigcup_U \brace{v:(v,c,u_1)\in P_U}$ satisfies these requirements, and such $S'$ takes $O^*(|U|n_0)=O^*\paren{|\F|^{\sum_{d\ge 2} n_d}}$ time to construct.

% Thus, $S$ is the union of $\brace{\~{v}+x\~{M}:x\in\F^{1\times k}}$ over many vector-matrix pairs $(\~{v},\~{M})$, one for each assignment of $u_2,\dots,u_{D-1}$. Constructing $S$ directly is costly, but we only need to know the linear span of $S$ (and to extract $n_0$ many linearly independent vectors from $S$ if one exists). It therefore suffices to replace each set $\brace{\~{v}+x\~{M}:x\in\F^{1\times k}}$ with $\brace{\~{v},\~{v}+\~{M}_{0,:},\dots,\~{v}+\~{M}_{k-1,:}}$, as the two have equal linear span and the latter is a subset of the former; then we take the union of these smaller sets, and extract $n_0$ linearly independent vectors $v^{(0)},\dots,v^{(n_0-1)}$ from this union.
% The time complexity (ignoring polynomial factors) is dominated by enumerating $u_2,\dots,u_{D-1}$.

By using the faster of these two methods for solving $Q$ \footnote{This can be decided by pretending their asymptotic time complexities are exact measures of running time.}, we can solve $Q$ in $O^*\paren{|\F|^{\min\brace{R,\ \sum_{d\ge 2} n_d}}}$ time.
Repeating for each assignment of $(Y_d)_{d\ne 0}$ makes the overall running time \[\mainResult.\]
With small modifications, the space complexity can be improved to $O^*(1)$, as demonstrated in Algorithm \ref{alg:main}.

% \begin{figure}
    \begin{algorithm}
        \caption{Search algorithm for Theorem \ref{thm:main}}
        \label{alg:main}
        \begin{algorithmic}[1]
            \Require{
                $T\in\F^{n_0\times\dots\times n_{D-1}}$, $R\ge 0$;
                $T$ concise, $n_0\ge \dots\ge n_{D-1}$
            }
            \Ensure{
                returns a rank-$\le R$ CPD of $T$, if one exists; else, returns null
            }
            \Function{search}{$T, R$}
                \If{$R<n_0$}
                    \State \Return{\textbf{null}}
                \EndIf
                % \State, \Let{}{}, \Call{}{}, \For{} \EndFor
                \For{$1\le d<D,\ Y_d\in\F^{n_d\times (R-n_0)}$}
                    \Let{$\Delta$}{\Call{test}{$T,R,(Y_d)_{1\le d<D}$}}
                    \If{$\Delta$ \textrm{not null}}
                        \State \Return{$\Delta$}
                    \EndIf
                \EndFor
                \State \Return{\textbf{null}}
            \EndFunction
            
            % \Require{
            %     $T\in\F^{n_0\times\dots\times n_{D-1}}$, $R\ge n_0$, $Y_d\in\F^{n_d\times (R-n_0)}$ for $1\le d<D$;
            %     $T$ concise, $n_0\ge \dots\ge n_{D-1}$
            % }
            % \Ensure{
            %     returns a rank-$\le R$ CPD of $T$, if one exists of the form $\cpdeval{W,\MA{c|c}{X_1&Y_1},\dots,\MA{c|c}{X_{D-1}&Y_{D-1}}}$ for some $W,X_1,\dots,X_{D-1}$; else, returns null
            % }
            \Function{test}{$T,R,(Y_d)_{1\le d<D}$}
                \Let{$(Q,C)$}{$([],[])$}
                \For{$(v,c)\in$ \Call{good\_pairs}{$T,R,(Y_d)_{1\le d<D}$}}
                    \If{$\rk{\MA{c}{Q\\\hline v}}>\rk{Q}$}
                        \Let{$(Q,C)$}{$\paren{\MA{c}{Q\\\hline v}, \MA{c}{C\\\hline c}}$}
                    \EndIf
                \EndFor
                \If{$Q$ has $n_0$ many rows}
                    \Let{$(X_d)_{1\le d<D}$}{$([])_{1\le d<D}$}
                    \For{$0\le i<n_0$}
                        \Let{$T_\square$}{$Q_{i,:}\times_0 T - \cpdeval{C_{i,:},Y_1,\dots,Y_{D-1}}$}
                        \Let{$(u_d)_{1\le d<D}$}{\Call{rank1}{$T_\square$}} \Comment{obtainable with Lemma \ref{rank1}}
                        \Let{$(X_d)_{1\le d<D}$}{$\paren{\MA{c|c}{X_d&u_d}}_{1\le d<D}$}
                    \EndFor
                    \State \textbf{return} $\Big(Q^{-1}\MA{c|c}{I_{n_0}&C}, \MA{c|c}{X_1&Y_1},\dots$ \\ $\MA{c|c}{X_{D-1}&Y_{D-1}}\Big)$
                \EndIf
                \State \Return{\textbf{null}}
            \EndFunction
            \Function{good\_pairs}{$T,R,(Y_d)_{1\le d<D}$}
                \If{$R \le \sum_{d\ge 2} n_d$}
                    \For{$v\in \F^{1\times n_0},\ c\in \F^{1\times (R-n_0)}$}
                        \If{$\rk{v\times_0 T - \cpdeval{c,Y_1,\dots,Y_{D-1}}}\le 1$}
                            \Yield{$(v,c)$} \Comment{emit output without storing in memory}
                        \EndIf
                    \EndFor
                    % \State \Return{}
                \Else
                    \For{$2\le d<D,\ u_d\in\F^{n_d}$}
                        \State $f\gets\Big(\MA{c|c|c}{v&c&u_1}\mapsto v\times_0 T - \cpdeval{c,Y_1,\dots,Y_{D-1}}-u_1\otimes\paren{\bigotimes_{2\le d<D} u_d}\Big)$
                        \Let{$M$}{matrix s.t. $\rowspan{M}=\ker{f}$} \Comment{obtainable with Gaussian elimination}
                        \For{$0\le i<(\# \textrm{ rows in } M)$}
                            \Yield{$(M_{i,:n_0},M_{i,n_0:R})$}
                        \EndFor
                    \EndFor
                \EndIf
            \EndFunction
        \end{algorithmic}
    \end{algorithm}
% \end{figure}

\subsection{Extra improvement for 3-dimensional tensors}
\label{main2}

Here we prove Theorem \ref{thm:main2}.
When $D=3$, the tensor contractions $v\times_0 T$ are matrices and the CPDs $\cpdeval{c,Y_1,Y_2}$ are equivalent to matrix products $Y_1\diag{c}Y_2^\intercal$. Since matrix products are much better understood than general tensor CPDs, we can enumerate $Y_1,Y_2$ in a more clever way than brute force. All other steps remain the same as in the algorithm for Theorem \ref{thm:main}.

% We are particularly interested in enumerating ``loose" matrix rank factorizations $M=UV$ for an arbitrary matrix $M$; we say ``loose" because $U$ is not required to contain exactly $\rk{M}$ many columns.
The following lemmas allow us to efficiently enumerate matrix factorizations of the form $M=UV$ for fixed $M$ and fixed shapes for $U,V$:

% We first enumerate loose factorizations where both $M$ and $U$ are fixed:

\begin{lemma}
\label{linsys}
Let $M\in\F^{m\times n}, U\in\F^{m\times k}$ be arbitrary matrices, and $s:=\rk{U}$.
Then the number of matrices $V\in\F^{k\times n}$ such that $M=UV$ is at most $|\F|^{(k-s)n}$, and all such matrices can be listed in $O^*\paren{|\F|^{(k-s)n}}$ time.
\end{lemma}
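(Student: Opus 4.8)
The plan is to reduce to the case where $U$ is in a normalized form, solve the resulting system column by column, and count solutions.

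First I would left-multiply by an invertible matrix to put $U$ into reduced row echelon form. Concretely, pick $P\in\GL{m}{\F}$ with $PU=\rref{U}$; then $M=UV$ is equivalent to $PM = (PU)V = \rref{U}\,V$. So without loss of generality we may assume $U=\rref{U}$, which has exactly $s=\rk{U}$ nonzero rows, each with a leading $1$ in a distinct pivot column, and $m-s$ zero rows. Note that the pivot columns of $\rref{U}$ are $s$ of the $k$ columns, so there are $k-s$ non-pivot columns.

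Next I would analyze the system one column of $V$ at a time. Writing $M_{:,j}$ and $V_{:,j}$ for the $j$-th columns, the equation $M = UV$ decouples into $n$ independent systems $M_{:,j} = U\,V_{:,j}$, one for each $0\le j<n$. For a fixed $j$: the zero rows of $U=\rref{U}$ force $(M_{:,j})_i = 0$ for every zero-row index $i$; if this consistency condition fails for some $j$, there are no solutions $V$ at all and the bound holds trivially. If it holds, then for each nonzero row of $\rref{U}$ with pivot column $p$, the entry $(V_{:,j})_p$ is uniquely determined by $(M_{:,j})$ and the entries of $V_{:,j}$ in the non-pivot rows; the $k-s$ entries of $V_{:,j}$ in non-pivot rows are free. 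Hence each column of $V$ has at most $|\F|^{k-s}$ valid choices (exactly that many when the system is consistent), and the choices across different columns are independent, giving at most $|\F|^{(k-s)n}$ matrices $V$ in total.

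For the algorithmic claim I would enumerate exactly these solutions: compute $P$ and $\rref{U}$ by Gaussian elimination in $\poly(m,k)$ time, check the consistency conditions, and then iterate over all assignments of the $(k-s)n$ free entries, back-substituting to fill in the pivot entries; this enumerates every solution (and only solutions) in $O^*(|\F|^{(k-s)n})$ time, with the polynomial overhead coming from the row reduction and from reconstructing each $V$. The only mild subtlety — not really an obstacle — is bookkeeping the two cases (inconsistent system, where the count is $0$, versus consistent, where it is exactly $|\F|^{(k-s)n}$) uniformly under the stated ``at most'' bound; everything else is routine linear algebra.
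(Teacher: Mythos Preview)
Your proof is correct and follows essentially the same approach as the paper: reduce $U$ to a normal form via an invertible left-multiplication, check consistency on the zero rows, and parameterize the solution set by the $(k-s)n$ free entries. The only cosmetic difference is that the paper also right-multiplies by some $Q\in\GL{k}{\F}$ to bring $U$ all the way to $\left[\begin{smallmatrix}I_s & O\\ O & O\end{smallmatrix}\right]$, so the free block becomes the bottom $k-s$ rows of $Q^{-1}V$ rather than the non-pivot rows you track; this is the same argument up to a column permutation.
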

\begin{proof}
By row reduction, $\exists P\in\GL{m}{\F}, Q\in\GL{n}{\F}$ s.t. $PUQ=\MA{c|c}{I_s&O\\\hline O&O}$. Substituting $M'=PM$, $U'=PUQ$, and $V'=Q^{-1}V$, the condition $M=UV$ is equivalent to $M'=\MA{c|c}{I_s&O\\\hline O&O}V'=\MA{c}{V'_{:s,:}\\\hline O}$.

If $M'_{s:,:}\ne O$, there is no solution for $V'$; otherwise, the set of all solutions is $\brace{\MA{c}{M'_{:s,:}\\\hline L}: L\in\F^{(k-s)\times n}}$, and each element can be enumerated in polynomial time.
\end{proof}

% Then we enumerate all loose factorizations for a given number of columns in $U$:

\begin{lemma}
\label{rankfac}
Let $M\in\F^{m\times n}$ be an arbitrary matrix and $r:=\rk{M}$. For any $k\ge r$, the number of matrix pairs $(U\in\F^{m\times k},V\in\F^{k\times n})$ such that $M=UV$ is at most $|\F|^{(k-r)(m+n)+r^2}$, and all such pairs can be listed in $O^*\paren{|\F|^{(k-r)(m+n)+r^2}}$ time.
\end{lemma}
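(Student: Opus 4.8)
The plan is to normalize $M$ by row and column operations, expand the factorization $M=UV$ into blocks, and enumerate it as a four-level nested loop (using Lemma \ref{linsys} as a subroutine), arranged so that the four branching factors multiply to exactly $|\F|^{(k-r)(m+n)+r^2}$.

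First I would run Gaussian elimination to obtain $P\in\GL{m}{\F}$ and $Q\in\GL{n}{\F}$ with $PMQ=\M{I_r&O\\O&O}$; call this matrix $M_0$. Since $M=UV$ is equivalent to $M_0=(PU)(VQ)$, and $(U,V)\mapsto(PU,VQ)$ is a bijection between pairs of matrices of the prescribed shapes (with inverse $(U',V')\mapsto(P^{-1}U',V'Q^{-1})$), it suffices to enumerate all factorizations $M_0=U'V'$ and apply the inverse map to each. Writing $U'=\MA{c}{U_1\\\hline U_2}$ with $U_1\in\F^{r\times k}$ and $V'=\MA{c|c}{V_1&V_2}$ with $V_1\in\F^{k\times r}$, the equation $M_0=U'V'$ is equivalent to the four block conditions $U_1V_1=I_r$, $U_1V_2=O$, $U_2V_1=O$, $U_2V_2=O$; note that the first already forces $\rk{U_1}=r$ and $\rk{V_1}=r$.

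So I would enumerate, in a nested loop: (i) every $U_1\in\F^{r\times k}$ of full row rank, of which there are at most $|\F|^{rk}$; (ii) for each, every $V_1\in\F^{k\times r}$ with $U_1V_1=I_r$, obtained via Lemma \ref{linsys} applied with $I_r$ in place of ``$M$'' and $U_1$ in place of ``$U$'' (so $s=r$), of which there are at most $|\F|^{(k-r)r}$; (iii) for each, every $V_2\in\F^{k\times(n-r)}$ whose columns all lie in $\ker{U_1}$ (a space of dimension $k-r$), of which there are at most $|\F|^{(k-r)(n-r)}$ --- this being exactly the solution set of $U_1V_2=O$; (iv) for each, every $U_2\in\F^{(m-r)\times k}$ with $U_2\MA{c|c}{V_1&V_2}=O$, of which there are at most $|\F|^{(m-r)(k-r)}$ since $\rk{\MA{c|c}{V_1&V_2}}\ge\rk{V_1}=r$ --- this being exactly the common solution set of $U_2V_1=O$ and $U_2V_2=O$. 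Reading off $U'$ and $V'$ at each leaf then produces every valid pair exactly once, because a valid pair determines $(U_1,U_2,V_1,V_2)$ uniquely, and conversely every tuple visited by the loop satisfies all four block equations.

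Finally, multiplying the four bounds, the number of pairs is at most $|\F|^{rk+(k-r)r+(k-r)(n-r)+(m-r)(k-r)}$, and the exponent simplifies to $r^2+(k-r)\bracket{r+r+(n-r)+(m-r)}=r^2+(k-r)(m+n)$; using $r\le m$ and $r\le n$ one checks the node count at each intermediate level is also at most $|\F|^{(k-r)(m+n)+r^2}$, so the traversal --- with polynomial overhead per node for Gaussian elimination and for the Lemma \ref{linsys} subroutine --- runs in $O^*\paren{|\F|^{(k-r)(m+n)+r^2}}$ time. I do not anticipate a genuine obstacle: the substance is getting the block decomposition and the bijection with tuples $(U_1,V_1,V_2,U_2)$ exactly right, and the bookkeeping that the four branching factors multiply to the claimed bound. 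The place needing the most care is invoking Lemma \ref{linsys} with the correct shapes and rank in step (ii), and confirming that steps (iii) and (iv) range over precisely the matrices satisfying the remaining equations --- no more and no fewer.
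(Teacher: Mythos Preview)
Your proposal is correct and follows essentially the same approach as the paper: normalize $M$ to $\MA{c|c}{I_r&O\\\hline O&O}$ via $P,Q$, then enumerate $U'_0=U_1$ of full row rank, then the compatible $V'$, then the compatible $U'_1=U_2$, using Lemma~\ref{linsys} for the linear-system steps. The only cosmetic difference is that the paper keeps $V'$ in one piece and solves $U'_0V'=\MA{c|c}{I_r&O}$ in a single application of Lemma~\ref{linsys} (branching factor $|\F|^{(k-r)n}$), whereas you split $V'=\MA{c|c}{V_1&V_2}$ and handle the two column blocks in two substeps with branching factors $|\F|^{(k-r)r}$ and $|\F|^{(k-r)(n-r)}$; the product is the same and the final count coincides.
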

\begin{proof}
By row reduction, $\exists P\in\GL{m}{\F}, Q\in\GL{n}{\F}$ s.t. $PMQ=\MA{c|c}{I_r&O\\\hline O&O}$, so we can define $U':=PU, V'=VQ$ such that the condition $M=UV$ is equivalent to $\MA{c|c}{I_r&O\\\hline O&O}=U'V'$.

Split $U'=\MA{c}{U'_0\\\hline U'_1}$, where $U'_0:=U'_{:r,:},\ U'_1:=U'_{r:,:}$. Then the condition is equivalent to $U'_0V'=\MA{c|c}{I_r&O} \ \land \ U'_1V'=O$. A corollary of the first equation is that $\rk{U'_0}=r$.

Enumerate all possible $U'_0\in\F^{r\times k}$. For each $U'_0$ such that $\rk{U'_0}=r$, enumerate all $V'$ satisfying $U'_0V'=\MA{c|c}{I_r&O}$. Finally, for each $V'$, enumerate all $U'_1$ satisfying $U'_1V'=O \Leftrightarrow (V')^\intercal (U'_1)^\intercal=O$.

By Lemma \ref{linsys}, there are $O\paren{|\F|^{(k-r)n}}$ many $V'$ to enumerate for each fixed $U'_0$, since $\rk{U'_0}=r$; and $O\paren{|\F|^{(k-r)(m-r)}}$ many $U'_1$ to enumerate for each fixed $V'$, since $\rk{V'}\ge r$.
Thus, the total number of satisfying pairs $(U',V')$ is \[O\paren{|\F|^{rk}\cdot |\F|^{(k-r)n}\cdot |\F|^{(k-r)(m-r)}}\]
\[=O^*\paren{|\F|^{(k-r)(m+n)+r^2}},\] and enumerating each one takes polynomial time.
\end{proof}

Now we use these lemmas to solve the original problem.
As a reminder, we want to find some $Y_1,Y_2$ such that the set
\[S:=\brace{v\in\F^{1\times n_0} : \exists c\in\F^{1\times (R-n_0)} \textrm{ s.t. } \rk{v\times_0 T - \cpdeval{c,Y_1,Y_2}}\le 1}\]
contains $n_0$ many linearly independent vectors.
% Notice that this condition has permutation and scaling symmetry, i.e. if $(Y_1,Y_2)$ is a solution, then so are $(Y'_1,Y'_2):=(Y_1\diag{\lambda_1},Y_2\diag{\lambda_2})$ and $(Y''_1,Y''_2):=(Y_1P,Y_2P)$, for arbitrary $\lambda_1,\lambda_2\in(\F\setminus\brace{0})^{R-n_0}$ and permutation matrix $P$.

Since $n_0>0$ by the nontriviality conditions, we know $S$ must contain at least one vector, so we fix such a vector $v$. Then we enumerate $Y_1,Y_2$ such that $\exists c\in\F^{1\times (R-n_0)}$ s.t. $\rk{v\times_0 T - \cpdeval{c,Y_1,Y_2}}\le 1$. This condition is equivalent to $\exists c\in\F^{1\times (R-n_0)}, a\in\F^{n_1\times 1}, b\in\F^{1\times n_2} \textrm{ s.t. } v\times_0 T - \cpdeval{c,Y_1,Y_2}=ab$. Rearranging terms yields
\[v\times_0 T=\cpdeval{c,Y_1,Y_2}+ab\]
\[=\MA{c|c}{Y_1&a}\diag{\MA{c|c}{c&1}}\MA{c}{Y_2^\intercal \\\hline b}.\]

WLOG we can force $c$ to be a run of 0s followed by a run of 1s, since $S$ is invariant under simultaneous permutation of the columns of $Y_1$ and $Y_2$, and under independent scaling of columns of $Y_1, Y_2$.

Let $K:=R-n_0+1$ and $z$ be the number of 0s in $c$; then the right-hand side of the above equation is equal to $W_1W_2$,
% $\MA{c|c}{Y_1&a}\diag{\MA{c|c}{c&1}}\MA{c}{Y_2^\intercal \\\hline b}=W_1W_2$,
where $W_1:=\MA{c|c}{Y_1&a}_{:,z:}\in\F^{n_1\times (K-z)}$ and $W_2:=\MA{c}{Y_2^\intercal \\\hline b}_{z:,:}\in\F^{(K-z)\times n_2}$. Then enumerate all $(W_1,W_2)$ such that $v\times_0 T=W_1 W_2$, and enumerate $(Y_1,Y_2)=\paren{\MA{c|c}{L_1&(W_1)_{:,:K-z-1}},\MA{c}{L_2\\\hline (W_2)_{:K-z-1,:}}^\intercal}$ over all arbitrary $L_1\in\F^{n_1\times z}, L_2\in\F^{z\times n_2}$.

By Lemma \ref{rankfac}, there are $O\paren{|\F|^{(K-z-r)(n_1+n_2)+r^2}}$ many pairs $(W_1,W_2)$ to enumerate.
From each pair, there are $O\paren{|\F|^{z(n_1+n_2)}}$
many $(Y_1,Y_2)$ to enumerate.
We have $z\le K-r$, otherwise $W_1 W_2$ would not have high enough rank. Summing over all possible $z$, the total cost of enumerating $Y_1,Y_2$ is
\[O^*\paren{\sum_{0\le z\le K-r} |\F|^{(K-z-r)(n_1+n_2)+r^2} |\F|^{z(n_1+n_2)}}\]
\[=O^*\paren{\cancelto{\textrm{polynomial}}{(K-r+1)}|\F|^{(K-r)(n_1+n_2)+r^2}}.\]

Since $v\times_0 T$ has shape $n_1\times n_2$, $r\le\min\brace{n_1,n_2}\le \frac{n_1+n_2}{2}$, so this cost is strictly decreasing with respect to $r$.

Finally, we iterate over $v$ to solve the original problem. However, we do not have to test all possible $v$.
Let $r_*$ be the smallest $r$ such that the set $S_r:=\brace{v\in\F^{1\times n_0}:\rk{v\times_0 T}\le r}$ contains $n_0$ many linearly independent vectors; then $S$ must in fact contain some $v$ such that $\rk{v\times_0 T}\ge r_*$, so we only need to iterate over such $v$.

We also notice that because $\rk{S_{r_*}}=n_0$, we can construct a rank-$(n_0r_*)$ CPD of $T$, since given $n_0$ many linearly independent vectors $v^{(0)},\dots,v^{(n_0-1)}\in S_{r_*}$, each axis-0 slice of $\MA{c}{v^{(0)}\\\hline \vdots \\\hline v^{(n_0-1)}}\times_0 T$ has rank $\le r_*$. Thus, if the rank threshold $R$ is $\ge n_0r_*$, we can solve the original CPD problem in only $O^*\paren{|\F|^{n_0}}$ time, as finding $r_*$ and constructing $S_{r_*}$ can be done within that time complexity.

Therefore, for worst-case analysis we only need to consider the case $r_*>\frac{R}{n_0}$. We also trivially have $r_*\le \min\brace{n_1,n_2}$ by considering matrix shape. WLOG assume $\frac{R}{n_0}<\min\brace{n_1,n_2}$, otherwise $R>n_0\min\brace{n_1,n_2}$ and a rank-$R$ CPD of $T$ would trivially exist.
Because the cost of enumerating $(Y_1,Y_2)$ for a given $v$ is strictly decreasing w.r.t. $r:=\rk{v\times_0 T}$, it suffices to set $r_*$ to the smallest integer strictly greater than $\frac{R}{n_0}$. Thus, the overall running time of our algorithm is
\[O^*\paren{\underbrace{|\F|^{n_0}}_{\textrm{enum } v} \cdot \underbrace{|\F|^{(R-n_0+1-r_*)(n_1+n_2)+r_*^2}}_{\textrm{enumerate } (Y_1,Y_2)} \cdot \underbrace{|\F|^{\cancelto{n_2}{\min\brace{R,\ n_2}}}}_{\textrm{construct } S}},\]
\[\textrm{ where } r^*:=\rStar.\]

% When $\omega(n)\le R\le O(n$
% This algorithm has smaller exponent term $|\F|^{\dots}$ in its time complexity than Theorem \ref{thm:main}.

% $n_0+(1-r_*)(n_1+n_2)+r_*^2$ higher. This quantity is negative precisely when $\frac{s-\sqrt{s^2-4S}}{2}<r_*<\frac{s+\sqrt{s^2-4S}}{2}$, where $s:=n_1+n_2,\ S:=n_0+n_1+n_2$.

To compare this algorithm (Theorem \ref{thm:main2}) against Theorem \ref{thm:main}, suppose all $n_d$ were $\Theta(n)$: then it is straightforward to check that Theorem \ref{thm:main2} is faster if $R\ge \omega(n)$, with an exponent in the $|\F|^{\dots}$ term of about $O(R)$ less than in Theorem \ref{thm:main}.
However, when Theorem \ref{thm:main} is augmented with forced lexicographic ordering, it becomes faster than Theorem \ref{thm:main2} for large enough $R$, since (for fixed $n$) $(R-\Theta(n))!=\frac{R!}{R^{\Theta(n)}}=|\F|^{\Theta(R\log_{|\F|} R)} \gg |\F|^{O(R)}$.
We suspect a similar lexicographic ordering can be applied to Theorem \ref{thm:main2}, but doing so is not obvious to us. For example, we cannot filter out all assignments of $(Y_1,Y_2)$ except those for which $(Y_1)_{:,r}\otimes (Y_2)_{:,r}$ is lexicographically nondecreasing with respect to $r$, because we assumed the columns of $Y_1,Y_2$ were already permuted such that $c=\M{0&\dots&0&1&\dots&1}$.

% then $s-\sqrt{s^2-4S}=s\paren{1-\sqrt{1-\frac{4S}{s^2}}}\sim s\paren{\frac{2S}{s^2}}=\frac{2S}{s}=O(1)$ via a Taylor approximation, so the lower and upper bounds on $r^*$ are $O(1)$ and $\sim s$.
% Since $r^*\le \min\brace{n_1,n_2}\le \frac{s}{2}$, this new algorithm is faster for $\omega(n)\le R\le O(n^2)$ and all sufficiently large $n$. In this region, the magnitude of improvement in the exponent is $\abs{n_0+(1-r_*)(n_1+n_2)+r_*^2}
% =O\paren{r_*\paren{n_1+n_2-r_*}}=O(R)$.

\section{Experiments}
\label{sec:experiments}
To test the practicality of our algorithms, we implemented Theorem \ref{thm:main} with several programmatic optimizations:
\begin{itemize}
    \item each column $(Y_d)_{:,r}$ is normalized so that its topmost nonzero element is 1 -- in particular, no $(Y_d)_{:,r}$ is assigned to all-zeros;
    \item the tuples $((Y_d)_{:,r})_d$ (called ``factor tuples") are strictly lexicographically increasing with respect to $r$;
    \item we allow all $Y_d$ to have $\le R-n_0$ many columns, not just $=R-n_0$ (but all $Y_d$ need the same number of columns);
\end{itemize}

The lexicographic ordering constraint reduces asymptotic (and real-world) time complexity by a factor of $\sim (R-n_0)!$, which is already significant for small inputs.

The last requirement allows the search algorithm to be correct with \textit{strict} lexicographic order, otherwise we would have to use \textit{nonstrict} order instead (e.g., a rank-$R$ CPD of a tensor with duplicate factor tuples could be reduced to a CPD with rank $<R$). Doing so does not change asymptotic complexity time or significantly increase constant factors, since there are exponentially many more possible assignments of $Y_d$ where the $Y_d$ have $R-n_0$ many columns than with $<R-n_0$ many columns.

We did not implement Theorem \ref{thm:main2} because it would be slower or have little speed improvement over our implementation of Theorem \ref{thm:main} with lexicographic ordering, as explained in Section \ref{main2}.

We tested our algorithm on two tensors. The first tensor was $\ang{2,2,2}$, which underlies Strassen's algorithm.
The second was the $4\times 4\times 4$ tensor $T$ such that
\begin{multline}
\label{eq:W2}
T_{0,:,:}=\M{1&\phantom{0}&\phantom{0}&\phantom{0}\\&\\&\\&},\ 
T_{1,:,:}=\M{&1&\phantom{0}&\phantom{0}\\1&\\&\\&},
% \\
T_{2,:,:}=\M{&\phantom{0}&1&\phantom{0}\\&\\1&\\&},\ 
T_{3,:,:}=\M{&&&1\\&&1\\&1\\1},
\end{multline}
originally from a presentation by Lysikov \cite{lysikov2024tensor}, who showed that this tensor has rank $\le 7$ over any field of characteristic $\ne 2$. Over $\F_2$, however, our algorithm found that $\rk{T}=8$.

We were not able to test the next largest nontrivial matrix multiplication tensor, $\ang{3,2,2}$, because its rank is 11 over an arbitrary field \cite{alekseyev1985complexity}, and ruling out rank $R=10$ over $\F_2$ would require searching $\sum_{0\le k\le 10-6} \binom{(2^6-1)(2^4-1)}{k}
\approx \num{3.32e10}$ many states.
Testing other tensors of shape larger than $4\times 4\times 4$ would also be infeasible, unless their rank is very close to the maximum side length along an axis.

For each tensor $T$ we tested, we first set $R$ to each of $0,1,2,\dots$ until a rank-$R$ CPD was found, which determines $\rk{T}$; then we re-ran the CPD search on the tensor $T':=Q_0\times_0 (\dots (Q_{D-1}\times_{D-1} T)\dots)$ for uniformly random invertible matrices $Q_d$, with rank threshold $R=\rk{T}$. Doing so ensures that the performance of our algorithm was not just due to finding a lucky assignment of the partial factor matrices $Y_d$ early in the search. We do not run this test for $R<\rk{T}$ because the number of visited states would be the exact same each time.
Experimental results are shown in Table \ref{tab:experiments}.

Below we present the minimal-rank CPDs that our algorithm found for each tensor, notated by their factor matrices such that $T=\cpdeval{A_0,A_1,\dots}$:
\begin{itemize}
    \item $T=\ang{2,2,2}$:
    \begin{itemize}
        \item $A_0=\M{
            0&0&1&0&0&1&0\\
            1&0&1&1&1&0&0\\
            0&1&0&1&0&1&1\\
            0&1&0&0&1&0&0\\
        }$
        \item $A_1=\M{
            1&0&1&1&0&0&1\\
            1&0&1&1&0&1&0\\
            1&1&0&1&0&0&1\\
            1&0&0&0&1&0&0\\
        }$
        \item $A_2=\M{
            0&0&1&1&0&1&1\\
            0&1&0&0&0&0&1\\
            1&0&0&1&0&1&1\\
            1&0&0&1&1&0&1\\
        }$
    \end{itemize}
    
    \item $T=(\textrm{tensor in Eq. } \ref{eq:W2})$:
    \begin{itemize}
        \item $A_0=\M{
        	1&0&0&0&0&0&0&0\\
        	1&0&0&1&0&0&1&0\\
        	1&0&1&0&0&1&0&0\\
        	1&1&0&0&1&1&1&1\\
        }$
        \item $A_1=\M{
        	1&1&1&1&0&0&0&0\\
        	0&0&0&1&0&0&1&1\\
        	0&0&1&0&0&1&0&1\\
        	0&1&0&0&1&0&0&0\\
        }$
        \item $A_2=\M{
        	1&1&1&1&0&0&0&0\\
        	0&0&0&1&0&0&1&1\\
        	0&0&1&0&0&1&0&1\\
        	0&1&0&0&1&0&0&0\\
        }$ \footnote{Notice that $A_2=A_1$. We are not sure why our algorithm happened to find a solution satisfying this property.}
    \end{itemize}
\end{itemize}

\begin{table}
    \centering
    \begin{tabular}{|l|l||l|l|l|l|}
        \hline
        $T$ & $R$ & \# trials & time (s) & \# states & statistic \\
        % & & geo mean (min, med, max) & geo mean (min, med, max) \\
        \hline
    \multirow{5}{4em}{$\ang{2,2,2}$} & 6 & 1 & 1.84 & 25426 & --- \\
        \hhline{~-----}
         & \multirow{4}{0.75em}{7} & \multirow{4}{1em}{100} & 1.22 & \num{1.91e+04} & geo mean \\
         & & & 0.202 & \num{3.28e3} & min \\
         & & & 1.83 & \num{2.90e4} & median \\
         & & & 8.52 & \num{1.33e5} & max \\
        \hline
    \multirow{5}{4em}{Eq. \ref{eq:W2}} & 7 & 1 & 123 & 1898626 & --- \\
        \hhline{~-----}
         & \multirow{4}{0.75em}{8} & \multirow{4}{1em}{10} & 26.3 & \num{3.49e5} & geo mean \\
         & & & 11.7 & \num{1.54e5} & min \\
         & & & 28.3 & \num{3.78e5} & median \\
         & & & 35.3 & \num{4.66e5} & max \\
        \hline
    \end{tabular}
    \caption{Experimental results of our algorithm on hand-selected tensors $T$, with rank thresholds $R=\rk{T}-1$ and $R=\rk{T}$. All results are over the field $\F_2$.
    Elapsed time was measured on a MacBook with a 2.3 GHz 8-Core Intel Core i9.
    \\
    For $R=\rk{T}$, we ran our algorithm on several random invertible axis-contractions of $T$, to mitigate lucky search. ``geo mean" means geometric mean.
    We do not run this test for any $R<\rk{T}$ because the algorithm would iterate over all possible assignments of $(Y_d)_d$ anyway and detect that no solution exists.
    }
    \label{tab:experiments}
\end{table}

We compare our algorithm against Deza et. al.'s constraint programming \cite{deza2023fast}, using geometric mean statistics:
\begin{itemize}
    \item For $T=\ang{2,2,2}$, we rule out $R=6$ in $1.84$ seconds and $\sum_{0\le k\le 6-4} \binom{(2^4-1)^2}{k}=24526$ states, whereas Deza et. al. takes $429$ seconds and $\num{3.28e8}$ branches (i.e., recursive calls in the constraint solver).
    For $R=7$, the stats are more comparable (1.22 sec, \num{1.91e4} states vs. 0.74 sec, \num{6.41e5} branches).

    \item Deza et. al. successfully found a rank-11 CPD for $\ang{2,3,2}$ (equivalent to $\ang{3,2,2}$ up to permutation of axes) in 26.47 seconds and \num{1.56e7} branches. Currently, this tensor is beyond the reach of our algorithm, though a faster implementation and processor could change that.

    % \item Deza et. al. also experimented with constraints that break exhaustiveness, including sparsity and cyclic invariance (i.e. the tuple of factor matrices $(A,B,C)$ is equivalent to $(B,C,A)$, up to simultaneous permutation of matrix columns). By enforcing only sparsity, the authors found a rank-14 CPD for $\ang{2,2,4}$ in 19.6 hours, matching Hopcroft and Kerr \cite{hopcroft1971minimizing}; and by enforcing both sparsity and cyclic invariance, a rank-23 CPD of $\ang{3,3,3}$ within one hour, matching Laderman \cite{laderman1976noncommutative}.
\end{itemize}

Our implementation is available at \url{https://github.com/coolcomputery/tensor-cpd-search}.

\section{Conclusion}
We have presented two exact algorithms for determining whether an arbitrary tensor over a finite field has rank at or below an arbitrary threshold. Both algorithms significantly improve over our previous result, and to our knowledge there is no other earlier work on this problem that guarantees exactness.

The second algorithm (Thm. \ref{thm:main2}) has a slight improvement in runtime complexity over the first (Thm. \ref{thm:main}), but (for large enough inputs) this improvement is smaller than forcing lexicographic ordering for the first algorithm. It may be possible to modify Thm. \ref{thm:main2} to support lex. ordering, so that its roughly $|\F|^{O(R)}$-factor theoretical time improvement becomes useful.
% though doing so is currently not obvious to us.

Theorem \ref{thm:main} can naturally be implemented as the following depth-first search: initialize $r\gets R-1$, fix the column tuple $((Y_d)_{:,r})_d$, then decrement $r$ and recurse. In fact, our implementation works in this manner.
It may be possible to prune some recursive calls by running fast (but possibly weak) conditions that rule out the existence of a rank-$R$ CPD given a partial assignment of the $Y_d$; we are actively investigating various such pruning methods. Orthogonally, one may adjust the order of assignments for the tuple $((Y_d)_{:,r})_d$ that are searched at each state, using heuristics and/or machine learning, so that if a rank-$R$ CPD exists it can be found early.
Neither pruning nor heuristic ordering break exactness. Although tight analysis of time complexity would become difficult, the actual speed on specific tensor inputs could become exponentially faster with suitable pruning and ordering.

\section*{Acknowledgments}
We thank Prof. Virginia Vassilevska Williams for her continual mentorship on this research topic throughout the last three years.

\end{document}